\definecolor{slateblue}{RGB}{106,90,205}
\newtheorem{theorem}{Theorem}
\newtheorem{corollary}{Corollary}
\newcommand{\expectation}[3][0]{%
  \ifcase#1
     E( #2 \mid #3 )
     \or E \bigl( #2 \bigm\vert #3 \bigr)
     \or E \Bigl( #2 \Bigm\vert #3 \Bigr)
     \or E \biggl( #2 \biggm\vert #3 \biggr)
     \or E \Biggl( #2 \Biggm\vert #3 \Biggr)
  \else
     E \left( #2  \;\middle\vert\; #3 \right)
  \fi}
\def\ps@pprintTitle{%
 \let\@oddhead\@empty
 \let\@evenhead\@empty
 \def\@oddfoot{}%
 \let\@evenfoot\@oddfoot}
\begin{document}

\begin{frontmatter}

\title{Clearing algorithms and network centrality \\ \vspace{12pt}
\small{\today}
%Can bank-specific variables predict network contagion? \\  
% Is Basel~III regulation of systemic risk efficient?
%\small{Is Basel~III's identification of systemically important banks efficient?
%Are Basel~III indicators for systemic importance efficient?}  
}

%Assessing Basel~III's identification of systemic banks
\author{Christoph Siebenbrunner\fnref{fn1}}
\ead{christoph.siebenbrunner@maths.ox.ac.uk}

\fntext[fn1]{University of Oxford, Mathematical Institute and Institute for New Economic Thinking. All views expressed herein are those of the author and do not necessarily reflect the views of any affiliating organization.}

\date{\today}

\begin{abstract}
I show that the solution of a standard clearing model commonly used in contagion analyses for financial systems can be expressed as a specific form of a generalized Katz centrality measure under conditions that %can be interpreted economically as 
correspond to a system-wide shock. This result provides a formal explanation for earlier empirical results which showed that Katz-type centrality measures are closely related to contagiousness. It also allows assessing the assumptions that one is making when using such centrality measures as systemic risk indicators. I conclude that these assumptions should be considered too strong and that, from a theoretical perspective, clearing models should be given preference over centrality measures in systemic risk analyses.
%After critical appreciation of these assumptions I conclude that they are likely too strong for real-world applications. I argue that clearing models should be preferred over centrality measures for the purpose of systemic risk assessment.
%I argue that this provides evidence for preferring clearing models over standard centrality measures in systemic risk analyses, as the best that one could hope to obtain even with a highly specific, specially derived form of centrality measure is a model that makes strong, most likely unrealistic assumptions that could be avoided when using a clearing model instead.
\end{abstract}

\begin{keyword}
financial networks; systemic risk; network centrality. \\
\end{keyword}

\end{frontmatter}

\section{Introduction} \label{sec_Introduction}

The importance of the network of interbank loans for the dynamics of financial crises and systemic risk has been discovered long before the financial crisis of 2008. Early examples of models highlighting this relationship go back to \cite{Rochet1996}, \cite{Freixas2000}, \cite{Allen2000} and \cite{Kiyotaki2002}. \cite{Eisenberg2001} and - independently - \cite{Suzuki2002} have developed a clearing model for obligation networks that has become widely applied for modeling contagion due to default on bilateral loans. The model is agnostic to the nature of the firms that are exposed to each other in principle, but the primary applications usually focus on interbank markets. For this reason I will refer to the entities in the system as banks in this paper. \cite{Elsinger2006} provide one of the first of such applications within an integrated stress testing system, \cite{Elsinger2009}, \cite{Rogers2013} and \cite{Fischer2014} have developed important extensions. \cite{Battiston2012} and \cite{Furfine2003} have introduced alternative models for computing contagion effects that do not rely on the fixed-point argument of the clearing model. \cite{Upper2011} provides a good overview of different applications of such models. \cite{Barucca2016} have sought to unify the literature by introducing a general framework that encompasses many of the aforementioned models as special cases and connecting it to the credit valuation model of \cite{Merton1974}.

Following the theoretical works that showed the connection between financial networks and stability, several authors began studying the statistical properties of empirical networks (\cite{Boss2004}) and the relation of the network structure to contagion effects (\cite{Iori2006}, \cite{Nier2007}). A more recent strand of literature studies the empirical relation between network centrality measures contagiousness. \cite{Kuzubas2014} study characteristics a financial institution that was key to the Turkish financial crisis of 2000 and find increasing trends in several centrality measures prior to the outbreak of the crisis. \cite{Puhr2014} and \cite{Alter2015} find that the Katz centrality and its close cousin (see \cite{Newman2010}), the Eigenvector centrality, have the best explanatory power for contagion losses from a \cite{Eisenberg2001}-type clearing model for the Austrian and the German banking system, respectively. \cite{Kobayashi2013} and \cite{Gauthier2013} obtain similar results in simulations. In this study I demonstrate that these results were indeed to be expected, as the solution of the clearing model converges to a generalized Katz centrality measure as a crisis tends to affect the entire financial system.

At first sight, this result may appear at odds with the reasoning of \cite{Acemoglu2015} or \cite{Tahbaz-Salehi2015}, who argue that - notwithstanding their empirical performance - "\textit{off-the-shelf}" centrality measures such as the Katz centrality are a poor proxy for the results of clearing models from a theoretical perspective, as the latter exhibit non-linearities that typically cannot be captured by those indicators. Indeed, my work highlights the very rigorous assumptions that have to be made in order to be able equate the solution of a clearing model to a Katz centrality measure. This allows a critical appreciation of these assumptions, which leads me to agree with the aforementioned authors as a conclusion.

\section{Clearing model}

The model framework builds on the seminal contribution of \cite{Eisenberg2001}. Consider a financial system composed of a set $\mathscr{N}=\{1,\dots,N-1\}$ of interlinked banks. The linkages among these banks are captured in the bilateral loan matrix L, where $L_{i,j}$ represents the liabilities of bank $i$ towards j $j$. $L$ includes an additional row and column for a sink node which captures liabilities outside the system (e.g. deposits by customers). This specification ensures that the total liabilities of bank $i$ are given by the $i$-th entry of the vector of column sums of the liability matrix. Banks are further endowed with external assets $a$. Table \ref{tab_Definitions} gives an overview of the variables used. 

\textbf{Definitions}

\begin{table}[htp!]
\begin{threeparttable}
\caption{Variable Definitions}
\label{tab_Definitions}
\begin{center}
\begin{tabular}{l l p{8cm} } \\ \hline
 \\
		 \textbf{Definition} & \textbf{[Computation]} & \textbf{Description} \\
        \midrule
         $N \in \mathbb{N}$ & & $N$ is the number of banks in the system considered plus one (for the sink node)\\
		 $a \in \mathbb{R}_+^N$ & & $a_i$ are the external assets of bank $i$\\
         $L \in \mathbb{R}_+^{N\times N}$ & & $L_{ij}$ total liabilities of bank $i$ owed to bank $j$\\
         $l \in \mathbb{R}_+^N$ & $l_i = \sum_j L_{ij}$ & $l_i$ are the total nominal liabilities of bank $i$\\
         $p \in \mathbb{R}_+^N$ & $f(p) = p$ & $p$ is the clearing payment vector of payments that are actually made (as opposed to nominal liabilities)\\

         $C \in [0,1]^{N\times N}$ & $C_{ij} = \begin{cases}
\frac{L_{ji}}{l_j} & \mbox{if } l_j > 0 \\
0 & \mbox{otherwise } \end{cases}$ & If $C_{ij}>0$, it represents the relative share that bank $i$'s claim on bank $j$ has among the total liabilities of bank $j$. Note that if $x$ is a vector of payments made by each bank, $Cx$ gives the value of those payments for those creditors.  \\
         $D \in \{0,1\}^{N\times N}$ & $D_ij(x) = \begin{cases}
1 & \mbox{if } i=j \land a_i + (Cx)_i < l_i \\
1 & \mbox{if } i=j=N \\
0 & \mbox{otherwise } \end{cases}$ & $D$ is a diagonal matrix of default indicators. $D(x)_{ii}=1$ means that bank $i$ is in default under a given payment vector $x$. Note that $Dy$ sets all elements of vector $y$ whose positions belong to non-defaulted banks to zero. Note that the sink node is set to be in default as a convention. \\
    \midrule
    	 $o \in \mathbb{R}_+^N$ & & $o_i$ is the value of bank $i$'s asset before a shock\\
    	 $s \in \mathbb{R}^N$ & & $s_i$ is a shock to bank $i$'s assets\\

		 $r \in [0,1]$ & & $r$ is a recovery rate for assets (if they are worth less than their nominal value)\\
         $m \in (0,1)$ & & $m$ is an interpolation coefficient \\

    \bottomrule  
    
\hline
\end{tabular}
\begin{tablenotes}
\item 
\end{tablenotes}
\end{center}
\end{threeparttable}
\end{table}

The \textbf{balance sheet equation} in this model can be written as:

\begin{equation}
Equity = Assets - Liabilities = a + Cp - l
\end{equation}

A bank is said to be \textbf{insolvent} or \textbf{in default} if its equity is negative. If a bank is insolvent even if all other banks fully repay their liabilities (i.e. $a_i + (Dl)_i < l_i)$) then it is said to be in \textbf{fundamental default}. Note that the value of interbank claims, and thus the equity, depends on the value of interbank payments. If all banks fully repay their liabilities, $p=l$ and the equity can be written as $a + Cl - l$. However, the underlying assumption of the clearing model as introduced by \cite{Eisenberg2001} is that assets have to be used to  repay liabilities, hence insolvent banks cannot fully repay their liabilities. The intuition of a \textbf{clearing payment vector} is that all banks repay the minimum of their total liabilities and the total value of their external assets and the value of their claims on other banks under the clearing payment vector. Furthermore, following the approach of \cite{Rogers2013}, it is assumed that when a bank enters into default, the recovery value after liquidating its assets can be less than the original nominal value.

To formalize the intuition laid out above, consider the following map\footnote{Note that by setting $D_{NN}=1$ for the sink node, I am assuming that this node makes payments even though it has no liabilities. This has no implication for the solutions for the other banks, as these payments do not arrive anywhere. Given that the sink node does not need to make payments within the system (and can generally not be interpreted as an entity with a balance sheet), its value for the clearing payment vector can safely be ignored. Other authors (see e.g. \cite{Glasserman2016}) choose to exclude the sink node entirely, in this application, however, it is needed for calculation purposes, in the manner introduced here, as we shall see later.}:

\begin{equation}
f(p) = D(p) \left(rC\left(D\left(p\right) f\left(p\right) + \left(I - D\left(p\right)\right)l\right)+r_a a\right) + \left(I - D(p)\right) l
\end{equation}

This map returns for any payment vector a new payment vector, where each bank that is in default under the given payment vector returns the remaining value of its assets. A \textbf{clearing payment vector} $p$ is now any fixed point of this map: $f(p) = p$. This corresponds to the extension of the original model by \cite{Eisenberg2001} introduced by \cite{Rogers2013}.

\textbf{Solution}

In order to solve the model, first fix $D(p)=D$ and solve for the fixed point:

\[ f = rDCDf + rDCl - rDCDl + r_aDa + l - Dl \]
\[ (I - rDCD)(f - l) = D(r_aa + rCl - l) \]
\begin{equation} \label{eqn_ENsolution}
f = (I - rDCD)^{-1} D(r_aa + rCl - l) + l
\end{equation}

The existence of a solution to equations essentially similar to \ref{eqn_ENsolution} has been shown by various authors in the literature (\cite{Eisenberg2001,Rogers2013}), albeit for a slightly different defitionion for $D$ (without setting the sink node to defaulted). It follows as a corollary (\ref{thm_Invertbility_Corollary}) of theorem \ref{thm_Invertibility} that this small change does not affect the existence of a solution. As \cite{Elsinger2012} show, this computation has the advantage that the matrix inversion only has to be applied for the subset of rows and columns which correspond to already defaulted nodes, which is an advantage in real-world applications where the number of banks can be high, but the number of defaults often is low. \cite{Eisenberg2001} show that the clearing payment vector is unique under mild conditions\footnote{The conditions define that certain subsets of the financial system need to have a positive value of external assets. I will use $a_i>0 \forall i$ as a sufficient condition.} and can be obtained from the following iteration called the \textbf{fictious default sequence} initiated with $f_0=l$ \cite{Eisenberg2001}:

\begin{equation}
f_{n+1} = (I - rD(f_n)CD(f_n))^{-1} D(f_n) (r_aa + Cl - l) + l
\end{equation}

Going further, I will set the recovery rate for external assets to $r_a=1$ without loss of generality.

\section{Network centrality}

In this section I will show how the solution of the clearing model can be expressed as a simple network centrality measure under conditions that have a clear economic interpretation. Consider the realization of an exogenous shock $s$ which reduces the original value $o$ of the external assets:

\begin{equation}
a = o + s
\end{equation}

In the following steps, I will write the conditions that hold for $s_i \forall i=1 \dots N-1$. For the sink node, I will assume $a=o>0$. 

Assume that $\forall i < N : (Cl)_i < l_i$ and choose $s_i$ from the interval $s_i \in (-o_i,-o_i + l_i - (Cl)_i)$. This implies that $a_i \in (0,l_i - (Cl)_i)$, hence all banks still have positive external assets, but $a_i + (Cl)_i < l_i$, hence they are in fundamental default. \footnote{Note that the inclusion of a sink node is crucial here, as \cite{Eisenberg2001} show that for $a_i>0 \forall i$ not all nodes can be in fundamental default. In this setup, it is the sink node that has positive equity value, but is still considered to be in default under any $D(x)$ by construction.}. $s$ can thus be interpreted as a shock that renders all banks insolvent, while still keeping the value of their external assets positive (thus not violating the conditions of uniqueness for the clearing payment vector \cite{Eisenberg2001}). Let $m\in(0,1)$ and consider e.g. a linear interpolation:

\begin{equation}
s_i=m(l_i-(Cl)_i-o_i)-(1-m)o_i=ml_i-m(Cl)_i-o_i
\end{equation}

Note that since $a_i + (Cl)_i < l_i$ for all banks except the sink node, $D(l)=I$ (using the fact that the sink node is in default by convention). Hence the shock effectively pushes all banks beyond the default threshold where the non-linearity in the \cite{Eisenberg2001}-model occurs. Under these conditions the fictious default sequence converges after the first iteration and $f(l) = p$ is a clearing payment vector and the solution of the clearing model becomes:

\begin{equation} \label{eqn_Solution}
p = (I - rC)^{-1} (a + rCl - l) + l = (I - rC)^{-1} ((r-m)Cl-(1-m)l) + l
\end{equation}

\begin{theorem} \label{thm_Invertibility} \leavevmode

(a) $I - rC$ is invertible for $r \in [0,1)$ if the system does not contain a sink node

(b) $I - rC$ is invertible for $r \in [0,1]$ if the system contains a sink node
\end{theorem}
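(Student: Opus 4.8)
The plan is to reduce both statements to the location of the spectrum of the nonnegative matrix $C$, which I can read off from the column-sum structure built into its definition. The starting observation is that for any node $j$ with $l_j>0$ one has $\sum_i C_{ij}=\sum_i L_{ji}/l_j=l_j/l_j=1$, so every column of $C$ belonging to a node that carries liabilities sums to exactly $1$. Since $I-rC$ is invertible precisely when $1\notin\mathrm{spec}(rC)$, and a convenient sufficient condition is $\rho(rC)=r\,\rho(C)<1$, in which case the Neumann series $\sum_{k\ge 0}(rC)^k$ converges to $(I-rC)^{-1}$, the whole question collapses to where $\rho(C)$ sits relative to $1$. I would phrase both parts through this lens.

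For part (a), with no sink node every node carries liabilities, so $C$ is column-stochastic, $\mathbf{1}^\top C=\mathbf{1}^\top$. Then $\rho(C)=1$: the transpose is row-stochastic, so $1$ is an eigenvalue, and Gershgorin confines all eigenvalues to the closed unit disk. For $r\in[0,1)$ we get $\rho(rC)=r<1$ and invertibility follows from the Neumann series; at $r=1$ we have $1\in\mathrm{spec}(C)$, so $I-C$ is singular, which is exactly why the claimed interval is half-open.

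For part (b) the sink node changes the picture in the decisive way. The sink has no liabilities ($l_N=0$), so column $N$ of $C$ is identically zero, while each bank $j<N$ owes deposits to the sink ($L_{jN}>0$), so its column sum drops to $1-L_{jN}/l_j<1$. I would exploit this deficiency directly. Because column $N$ vanishes, a cofactor expansion along it gives $\det(\lambda I-C)=\lambda\det(\lambda I-\tilde C)$, where $\tilde C$ is the $(N-1)\times(N-1)$ interbank block; hence $\rho(C)=\rho(\tilde C)$. Every column of $\tilde C$ now sums to strictly less than $1$, so $\|\tilde C\|_1<1$ and therefore $\rho(C)=\rho(\tilde C)\le\|\tilde C\|_1<1$. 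Consequently $\rho(rC)\le\rho(C)<1$ for all $r\in[0,1]$, and the Neumann series delivers $(I-rC)^{-1}$ even at $r=1$. Equivalently, $I-rC$ is strictly diagonally dominant by columns for every $r\in[0,1]$ (the bank column $j$ needs only $1>r(1-L_{jN}/l_j)$, and the sink column reads $1>0$), so the Levy--Desplanques theorem gives nonsingularity in one line.

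The only genuinely delicate point, and the place I would spend the care, is the \emph{strict} inequality $\rho(C)<1$ that underlies the \emph{closed} interval in part (b). The bound $\|\tilde C\|_1<1$ requires every interbank column to leak mass to the sink, which holds as soon as each bank carries a positive deposit liability $L_{jN}>0$. If one only assumes that the sink is reachable from every bank through a directed chain of liabilities rather than directly, the crude norm bound no longer yields strictness, and I would instead appeal to the theory of substochastic (transient) nonnegative matrices: $\tilde C$ is then irreducibly diagonally dominant and Taussky's refinement of Gershgorin still forces $\rho(\tilde C)<1$. The corollary announced in the text then follows at once, since nonsingularity of $I-rC$ for all $r\in[0,1]$ with the sink included is precisely what the half-open versus closed distinction between (a) and (b) records.
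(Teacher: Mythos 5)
Your proof is correct and reaches the theorem by a genuinely different route than the paper. The paper works entirely through the Collatz--Wielandt characterization of the spectral radius: for (a) it notes that $C$ is column-stochastic so $\rho(C)=1$, and for (b) it asserts that, since the sink's column is zero, the maximum is attained among vectors with $x_N=0$ and concludes $0<\rho(C)<1$. You instead use the column-sum/Gershgorin bound plus the Neumann series for (a), and for (b) a cofactor expansion along the zero sink column to reduce the spectrum to the interbank block $\tilde C$, followed by $\rho(C)=\rho(\tilde C)\le\|\tilde C\|_1<1$. What your route buys is precision on exactly the point the paper glosses over: the strict inequality $\rho(C)<1$ is \emph{not} a consequence of merely appending a zero column. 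If some bank --- or some closed group of banks, e.g.\ two banks owing only each other and nothing outside --- has no direct or indirect liabilities to the sink, then $\tilde C$ contains a column-stochastic principal block, $\rho(C)=1$, and $I-C$ is singular even with a sink node present. Your explicit hypothesis $L_{jN}>0$ for every bank (or sink-reachability handled via the substochastic/class-decomposition argument, since reachability alone does not give irreducibility of $\tilde C$, so Taussky should really be applied classwise) is genuinely needed; flagging this is a substantive improvement on the paper's one-line ``since the remainder of the columns still sum to 1, we obtain $0<\rho(C)<1$.'' You also establish sharpness of (a) at $r=1$, which the paper does not address.

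One small slip: the parenthetical claim that $I-rC$ itself is strictly diagonally dominant by columns for every $r\in[0,1]$ is false for the \emph{full} $N\times N$ matrix at $r=1$. A bank column $j$ of $I-C$ has diagonal entry $1-C_{jj}$ and off-diagonal absolute sum $\sum_{i\ne j}C_{ij}=1-C_{jj}$, because the sink-row entry $C_{Nj}=L_{jN}/l_j$ belongs to that column; dominance is therefore only weak, and weak dominance does not preclude singularity. Your stated condition $1>r\bigl(1-L_{jN}/l_j\bigr)$ is precisely the strict dominance condition for the \emph{reduced} block $I-r\tilde C$, which is where Levy--Desplanques applies --- and since you have already factored $\det(\lambda I-C)=\lambda\det(\lambda I-\tilde C)$, applying it to the reduced block closes the argument with no loss.
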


\begin{proof} \leavevmode

Note that when $\lambda$ and $v$ are eigenvalues and -vectors of $C$, we have:

\begin{equation}
(\lambda I-C)v=0 \Leftrightarrow (I-\frac{1}{\lambda} C)v=0 \Leftrightarrow \det (\lambda I-C)=0
\end{equation}

So $(I - rC)$ is invertible for $r \ne \frac{1}{\lambda}$. Since $r \in [0,1]$ by definition, we need to investigate whether there are eigenvalues $\lambda \ge 1 \Rightarrow \frac{1}{\lambda} \le 1$. The maximum eigenvalue for a non-negative matrix can be obtained through the Collatz-Wielandt formula:

\begin{equation}
\rho (C) = \max_{x : x_i \ge 0 \land \exists x_i >0 } g(x,C) 
\end{equation}
\begin{equation}
g(x,C) = \min_{i : x_i \ne 0} \frac{(xC)_i}{x_i}
\end{equation}

Without a sink node, if all institution have at least some claims, $C$ would be column-stochastic, hence $g(x)_i = 1 \forall i = \rho(C)$. This proves part (a) for the case where all institutions have some claims. In the case this condition does not hold, one or more columns of $C$ consist of only zeros, which is equivalent to the case with a sink node discussed below.

Given that the sink node adds a column of all zeros to $C$, the maximum eigenvalue is attained for $g(x : x_N = 0)$.  Since the remainder of the columns still sum to 1, we obtain $0 < \rho (C) < 1$, hence $\frac{1}{\lambda} > 1$ for all positive eigenvalues. 
\end{proof}

\begin{corollary} \label{thm_Invertbility_Corollary}
Note that for $0 < D < I$, $0 < \rho (DCD) \le \rho (C)$, which shows the existence of a solution to equation \ref{eqn_ENsolution}.
\end{corollary}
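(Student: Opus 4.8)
The plan is to reduce the corollary to the very same eigenvalue argument already used for Theorem \ref{thm_Invertibility}, the only new ingredient being a comparison between the spectral radius of $DCD$ and that of $C$. Recall that the point of the statement is to guarantee that the matrix $(I - rDCD)$ appearing in equation \ref{eqn_ENsolution} is invertible for every admissible default pattern $D$; by the argument of Theorem \ref{thm_Invertibility} this holds as soon as $r\,\rho(DCD) < 1$, so it suffices to control $\rho(DCD)$ from above by $\rho(C)$.

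First I would establish the entrywise domination $0 \le DCD \le C$. Since $D$ is diagonal with $D_{ii} \in \{0,1\} \subseteq [0,1]$ and $C \ge 0$, the $(i,j)$ entry of $DCD$ equals $D_{ii}\,C_{ij}\,D_{jj} \le C_{ij}$ and is non-negative; thus $DCD$ is the non-negative matrix obtained from $C$ by zeroing the rows and columns of the solvent banks, and it is dominated entrywise by $C$. Next I would invoke the monotonicity of the Perron root under entrywise domination: for non-negative matrices $0 \le A \le B$ one has $\rho(A) \le \rho(B)$. Applied with $A = DCD$ and $B = C$ this yields $\rho(DCD) \le \rho(C)$.

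Combining with Theorem \ref{thm_Invertibility}(b), which gives $\rho(C) < 1$ whenever the system contains a sink node, I obtain $\rho(DCD) \le \rho(C) < 1$. Hence $r\,\rho(DCD) \le \rho(DCD) < 1$ for every $r \in [0,1]$, so no eigenvalue of $rDCD$ equals $1$, $(I - rDCD)$ is invertible, and the closed form in equation \ref{eqn_ENsolution} is well defined --- which is exactly the existence claim.

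The main obstacle is proving the monotonicity step cleanly, because both $C$ and $DCD$ are in general reducible (the sink column is identically zero, and $D$ zeroes further rows and columns), so one cannot appeal to the irreducible Perron--Frobenius theorem directly. I would handle this either through Gelfand's formula $\rho(A) = \lim_{k \to \infty} \|A^k\|^{1/k}$, noting that $0 \le A \le B$ implies $0 \le A^k \le B^k$ entrywise and hence $\|A^k\| \le \|B^k\|$ in any monotone norm, or equivalently through the Collatz--Wielandt characterization already written down in the proof of Theorem \ref{thm_Invertibility}. Finally, the stated lower bound $0 < \rho(DCD)$ is not needed for existence and can in fact fail --- if the defaulting set contains no directed cycle then $DCD$ is nilpotent and $\rho(DCD) = 0$, in which case $(I - rDCD)$ is still invertible --- so I would either add the hypothesis that the defaulted subsystem contains a cycle or simply drop the strict lower inequality, keeping only the upper bound that the argument actually uses.
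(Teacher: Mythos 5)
Your proposal is correct and is essentially the paper's own route: the paper states this corollary without any separate proof, and the intended argument is exactly what you supply --- the entrywise domination $0 \le DCD \le C$ for a $0$--$1$ diagonal $D$, monotonicity of the spectral radius over non-negative matrices (which, as you note, must be justified via Gelfand's formula or Collatz--Wielandt rather than irreducible Perron--Frobenius, since both matrices are reducible here), and Theorem \ref{thm_Invertibility}(b) giving $\rho(C) < 1$ in the presence of a sink node, whence $I - rDCD$ is invertible for all $r \in [0,1]$ and equation \ref{eqn_ENsolution} is well defined. Your closing observation is moreover a genuine correction to the statement as printed: the strict lower bound $0 < \rho(DCD)$ can fail (if the defaulted set induces no directed cycle, $DCD$ is nilpotent and $\rho(DCD) = 0$), but since invertibility needs only the upper bound, the existence claim survives exactly as you say.
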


A typical form of a systemic risk measure based on a clearing model is to consider the difference between total liabilities and the clearing payment vector (\cite{Glasserman2016}) $\sigma = l - p$. A systemic risk measure based on equation \ref{eqn_Solution} can thus be written as a centrality measure:

\begin{equation} \label{eqn_Centrality}
\sigma = (I - rC)^{-1}\beta
\end{equation}

With $\beta_i = (1-m)l_i - (r-m)(Cl)_i \forall r,m \in (0,1), i<N$ (the framework would also allow for setting bank-specific recovery and interpolation values by taking $r$ and $m$ as diagonal matrices). The functional form of $\sigma$ is equivalent to that of a Katz centrality measure (\cite{Newman2010, Katz1953}). In the standard definition, $C$ is an adjacency matrix and $\beta = \vec{1}$, so in the general case $\sigma$ could be seen as a generalization. In order for $C$ to be an adjacency matrix, we would have to assume that each bank has at most one creditor, s.t. $C_{ij} \in \{ 0,1 \} \forall i,j$. If we further set $r=m$, $\beta$ simplifies to $\beta_i = (1-r)l_i \forall r\in (0,1), i<N$ and we could obtain $\beta = \vec{1}$ through normalization if every bank has at least some liabilities. Under these conditions the solution of the clearing model is a specific form of a \textbf{Katz centrality measure}.

A large systemic shock that immediately renders all banks in the system insolvent is admittedly a strong assumption. We can relax this assumption by choosing $s_i \in (-o_i,-o_i + l_i - (Cp)_i)$ \footnote{In order to find suitable values for $s$, the model has to be solved first. Given that the aim is to have a small shock size, a reasonable approach would be to start by setting $s_i = (-o_i + l_i - (Cl)_i)) - \frac{k}{MaxSteps} (l_i - (Cl)_i)$ for a suitable $MaxSteps = 1000$, e.g., and iterating over $k = 1, 2, \dots$ until $D(p) = I$.}, which implies a less severe minimum shock since $Cp \le Cl$. Under this condition, $a_i + (Cp)_i < l_i \forall i<N$ and hence $\lim_{n \to N} D(f_n) \to I$. I.e. all banks are pushed into default after the algorithm converges (which happens after at most $N$ steps as shown by \cite{Eisenberg2001}). By choosing an analogous interpolation and inserting into $f(p)$, we again obtain a centrality measure (assuming that $I - (r-m)C$ is invertible):

\begin{equation}
s_i=m(l_i-(Cp)_i-o_i)-(1-m)o_i=ml_i-m(Cp)_i-o_i
\end{equation}

\[
p = (I - rC)^{-1} (a + rCl - l) + l = (I - rC)^{-1} (m(l-Cp+rCl) - l) + l
\]
%(I - rC)p = m(l-Cp+rCl) 
%(I - rC)p = m(l-Cp+rCl) 
%p - rCp + mCp = m(l+rCl) 
%(I - rC + mC)p = m(l+rCl) 
\begin{equation}
p = (I - (r-m)C)^{-1}m(l+rCl) 
\end{equation}

Which for $r=m$ simplifies to $p = rl+r^2Cl$.

%Studying the smallest possible shock, set $s_i = -o_i + l_i - (Cp)_i) - \epsilon$ for $\epsilon > 0$. For $\epsilon \to 0$ we obtain:
%\begin{equation}
%p = (I - rC)^{-1} (Cp - \epsilon + rCl) + l
%\end{equation}
%\[(I - rC)p -Cp =  (- \epsilon + rCl) + l \]
%\[p - rCp -Cp =  - \epsilon + (I + rC)l \]
%\[(I-(1+r)C)p =  (I + rC)l \]
%\begin{equation}
%p = (I-(1+r)C)^{-1}(I + rC)l
%\end{equation}

%Two important points should be noted 

\section{Conclusion} \label{sec_Conclusion}

It is often discussed and/or assumed that standard network centrality measures can give insights about the systemic risk contribution of individual banks (\cite{Puhr2014}). At the same time, sophisticated systemic stress testing tools have been designed using contagion models based on a model of interbank clearing (see e.g. \cite{Elsinger2006}). In this paper, I have showed that one common methodology used in this context, the clearing model developed by \cite{Eisenberg2001} (and in particular, the extension by \cite{Rogers2013}), converges to a special form of a Katz centrality measure as a crisis becomes so severe that all banks in the system are pushed into default (the existence of a world outside the system to which there exist financial liabilities is necessary - this is a mild assumption for a banking system, where deposit-taking is a central part of the business model). 

Given this result, I advocate using contagion models instead of centrality measures when analyzing systemic risk (in the case where the required data is available - \cite{Anand2015} provide a good overview of methods that can be used in the absence of granular network data). As \cite{Puhr2014} and others show, different versions of the Katz centrality measure and its cousins have the highest explanatory power for the output of a contagion model. I have shown that this result is due to a mathematical relation between those measures. A key difference is, however, that clearing models allow to interpret the assumptions made and the results obtained from an economic perspective. Such an analysis shows that the assumptions that one is making when using even a highly specific, specially derived form of a Katz centrality are strong and unrealistic for typical applications (most notably that the entire banking system is in fundamental default)\footnote{Another shortcoming of the $\sigma$-measure is that differences in initial capitalization, which are a significant driver of contagion dynamics, are canceled out through the shock. If one wanted to use the $\sigma$ measure for systemic risk analysis, one should consider including another factor like $\frac{s_i}{o_i + (Cl)_i}$ to capture the initial financial health of firm $i$.} So even if one is looking at the best possible centrality measure - as demonstrated both empirically by \cite{Puhr2014} and formally in this study - one is making strong, potentially unfounded assumptions that could be avoided by using a clearing model instead. 

Avenues for further research include performing a similar analysis for other types of clearing/contagion models, such as \cite{Furfine2003,Battiston2012} and/or establishing a consensus model as has been proposed by \cite{Barucca2016}.

\section*{References}\label{sec_References}
\bibliographystyle{apalike}
\bibliography{Mendeley}

\end{document}